\newtheorem{theorem}{Theorem}
\newtheorem{conjecture}{Conjecture}
\newtheorem{lemma}[theorem]{Lemma}
\def\EQ#1{\begin{equation}\begin{aligned}#1\end{aligned}\end{equation}}
\newcommand{\ketbra}[2]{|#1\rangle\langle #2|}
\begin{document}

\title{Quantum Rate-Distortion Coding of Relevant Information}
\author{Sina~Salek,
        Daniela~Cadamuro,
        Philipp~Kammerlander
        and~Karoline~Wiesner
\thanks{Sina Salek is with the Department of Computer Science, University of Oxford, Wolfson Building, Parks Road, UK. Email: salek.sina@gmail.com. He was with the University of Hong Kong, Hong Kong SAR, and the University of Bristol, UK, for part of this research.}
\thanks{Daniela Cadamuro is with Zentrum Mathematik, Technische Universit\"at M\"unchen, 85748 Garching,
Germany. She was with the University of Bristol for part of this research.}
\thanks{Philipp Kammerlander is with Institut f\"ur Theoretische Physik, ETH Z\"urich, 8093 Z\"urich, Switzerland.}
\thanks{Karoline Wiesner is with School of Mathematics, University of Bristol, Bristol BS8 1TW, United Kingdom.}}

%
%
%
%
%
\maketitle
\begin{abstract}
Rate-distortion theory provides bounds for compressing data produced by an information source to a specified encoding rate that is strictly less than the source's entropy. This necessarily entails some loss, 
or distortion, between the original source data and the best approximation after decompression.
The so-called Information Bottleneck Method is designed to compress only `relevant' information. Which information is relevant is determined by the correlation between the data being compressed and a variable of
interest, so-called side information.
 In this paper, an Information Bottleneck Method is introduced for the compression of quantum data. The channel communication picture is used for compression and decompression. 
The rate of compression is derived using an entanglement assisted protocol with classical communication, and under an unproved conjecture that the rate function is convex in the distortion parameter. The optimum channel achieving this rate for a 
given input state is characterised. The conceptual difficulties arising due to differences in the mathematical formalism between quantum and classical probability theory are discussed and solutions are presented.
\end{abstract}

\section{Introduction \label{intro}}

One of the most central results in classical information theory is Shannon's data compression theorem \cite{Shannon1948} which gives a fundamental limit on lossless compressibility of data. Due to statistical redundancies, data can be compressed at a rate bounded below by the source entropy, such that after decompression the full information is recovered without loss.  
Rate-distortion theory (RDT) is the branch of information theory that  compresses the data produced by an information source down to a specified encoding rate that is strictly less than the source entropy  \cite{Shannon1959}. This necessarily entails some loss, or distortion, between the original source data and the best approximation after decompression, according to some distortion measure. RDT is frequently used in multimedia data compression where a large amount of data can be discarded without any noticeable change to a listener or viewer.

Whilst RDT is an important and widely used concept in information theory, there are cases where only part of the information in the data to be compressed is relevant. For instance in speech signal processing one might be interested only in information about the spoken words in audio data. 
The Information Bottleneck Method (IBM), introduced by Tishby \emph{et al.}, achieves a lossy compression rate even lower than the rate given by RDT by keeping only `relevant' information \cite{Tishby1999}. Which information is relevant is determined by the correlation between the data being compressed and a variable of interest. The information to be recovered after decoding is only the relevant part of the source data. For example, one might have access to the transcript of an audio recording which has an entropy by orders of magnitude lower than the original audio data.
The transcript here can be used as side information to compress the audio data further than what  can be achieved by RDT, without increasing the distortion of the relevant information. 

Loss of information in the context of a compression-decompression scheme is mathematically equivalent to transmission of information through a noisy channel. Rather than characterising the information lost by encoding, one characterises the information lost during transmission. The Information Bottleneck Method is formulated as a communication problem with the relevant variable acting as side information.  Iterative algorithms to compute the optimum channel achieving the task are also provided in \cite{Tishby1999}.

In \cite{Still2016}, a first quantum extension of the Information Bottleneck Method was developed and applied to predictive filtering. 
In this paper we extend the Information Bottleneck Method to the quantum case by considering the transmission of quantum information through a quantum channel with side information. 
We derive the compression rate for an entanglement-assisted protocol with classical communication for our quantum extension of the IBM. Our derivation relies on the conjecture that the rate function is convex in distortion.
The optimum quantum channel that can achieve this rate is also characterised. 

A quantum extension to RDT was introduced by Barnum \cite{Barnum2000}. However, the results were unsatisfactory since the bound on the rate was given in terms of coherent information which can be negative. The results were improved and a complete characterisation of quantum channels achieving rate-distortion coding in a number of settings 
was given by Datta \emph{et al.} \cite{Datta2013}. Various settings of quantum RDT in the presence of auxiliary information were discussed in the work of Wilde \emph{et al.} \cite{Wilde2013}. However, the specific question of transmitting relevant information asked in the IBM with its choice of distance measure and the specifics of the optimisation problem have not been considered yet. 

The setting of the classical IBM is as follows. A correlated pair of random variables $X$ and $Y$ is given with a joint probability distribution $P(X,Y)$. The task is to find the optimum channel with input $X$ and output $\tilde X$ such that $\tilde X$ retains a fixed amount of correlation, $C$, with variable $Y$. The amount of correlation is quantified by the Shannon mutual information $I(\tilde X;Y):=H(\tilde X)+ H(Y)-H(\tilde XY)$, where $H(\cdot)$ is the Shannon entropy. For a successful completion of the IBM task it is required that $I(\tilde X;Y)\geq C$. 
Representing the channel by the conditional probability distribution $P(\tilde X |X)$, one can show that the classical rate of compression for a given minimum amount of correlation $C$, $R_{\text cls}(C)$,  is given by 
\EQ{
R_{\text cls}(C)=\min_{P({\tilde X}|X):I(\tilde X;Y)\geq C} I(X;\tilde X) \label{cIBM},
}
as it was first proposed in \cite{Tishby1999} and proved in \cite{BNT03}.

The IBM, however, is concerned not with an output distribution close to the input distribution but with an output characterised by its information about some other variable $Y$. The task of the IBM is to find 
the lowest value of $I(X;\tilde X)$ such that $I(\tilde X; Y)$ is still above some given threshold. The value of $I(X; \tilde X)$ can be reinterpreted as a communication rate, namely the number of transmitted bits needed to specify an element of $\tilde X$, per element of $X$ \cite[Sec.~2]{Tishby1999}. Minimising the mutual information with respect to all  channels that satisfy the threshold criterion achieves the task.\footnote{Note that the analogy between IBM and RDT is in spirit only. The technical difference is the distortion measure, which is a function of the output alphabet in the case of RDT while it is a function of the probability distribution of the outputs in the case of IBM.}

The channel that achieves the rate in Eq.~(\ref{cIBM}) can be found by the Lagrangian technique. The Lagrangian is defined as 
\EQ{
\mathcal{L}_{cls}= I(X;\tilde X)-\beta I(\tilde X;Y)-\sum_{x,\tilde x}\lambda(x)P(\tilde x|x), \label{clagrange}
}
where $\beta$ is the Lagrange multiplier for the information constraint and $\lambda(x)$ are the Lagrange multipliers for the normalisation constraint of the conditional distribution $P(\tilde X|X)$. Taking the derivative of the Lagrangian with respect to the channel and setting it to $0$ gives the expression for the channel as 
\EQ{
P({\tilde x}|x)=P({\tilde x})\frac{e^{-\beta D(P(Y|x)||P(Y|\tilde x))}}{Z}.\label{clopt}
}
$D(.||.)$ in the exponent on the right hand side is the Kullback-Leibler divergence of two probability distributions $P(Y|x)$ and $P(Y|\tilde x)$, and $Z$ is the normalising factor.

The  setting of the quantum IBM is as follows.  The input to the channel is the system $X$ which is described by a state $\rho_{XY}$ with side information $Y$. The channel acts on the $X$ part of this state,  $\rho_X$. The output of the quantum channel is the system $\tilde X$ which is also an entangled state with the side information $Y$, $\tilde\rho_{\tilde XY}$. 
Correlations in the state $\tilde\rho_{\tilde XY}$ are measured by the  quantum mutual information $I(\tilde X;Y)_{\tilde\rho}:=S(\tilde X)_{\tilde\rho}+S(Y)_{\tilde\rho}-S(\tilde XY)_{\tilde\rho}$ where $S(\cdot)$ is the von Neumann entropy to base $e$.
The Bottleneck constraint in the quantum case is  $I(\tilde X;Y)_{\tilde\rho}\geq J$, that is, a minimum amount of correlation $J$ between systems $\tilde X$ and $Y$ must be retained.
This choice of measure for correlation between $\tilde X$ and $Y$ naturally generalizes the setup of the classical Bottleneck method; it includes both classical correlation and quantum entanglement between the two systems.

In the next section we provide the  protocol for the quantum IBM which results in a conjecture for the quantum counterpart of the compression rate in Eq.~(\ref{cIBM}).  Section \ref{lagra} contains the channel optimisation,
resulting in the quantum counterpart of Eq.~(\ref{clopt}).

\section{A  Protocol for the Quantum Information Bottleneck Method \label{howto}}

In this section we explicitly describe a protocol that simulates the action of the channel in the setting discussed in the introduction by compression and decompression channels and classical communication with the assistance of entangled resources. We then define the compression rate for this protocol, before proving our main result. 

Following the usual approach in RDT we call a rate-distortion pair $(r,J)$ achievable,  
where $r\geq 0$ is the rate and $J \geq 0$ is the distortion, if, roughly speaking, a lossy channel exists such that it can transmit a message, i.e., a state of the given input system, by sending $r$ bits with at the amount of distortion not exceeding $J$. The rate function $R(J)$ is then given as an infimum over achievable pairs. Of course, it is crucial to specify what exactly is meant by ``lossy channel'' and ``achievable'' in our context, and this is what we will discuss now.

\begin{figure}[!t]
  \centering
      \includegraphics[width=0.45\textwidth]{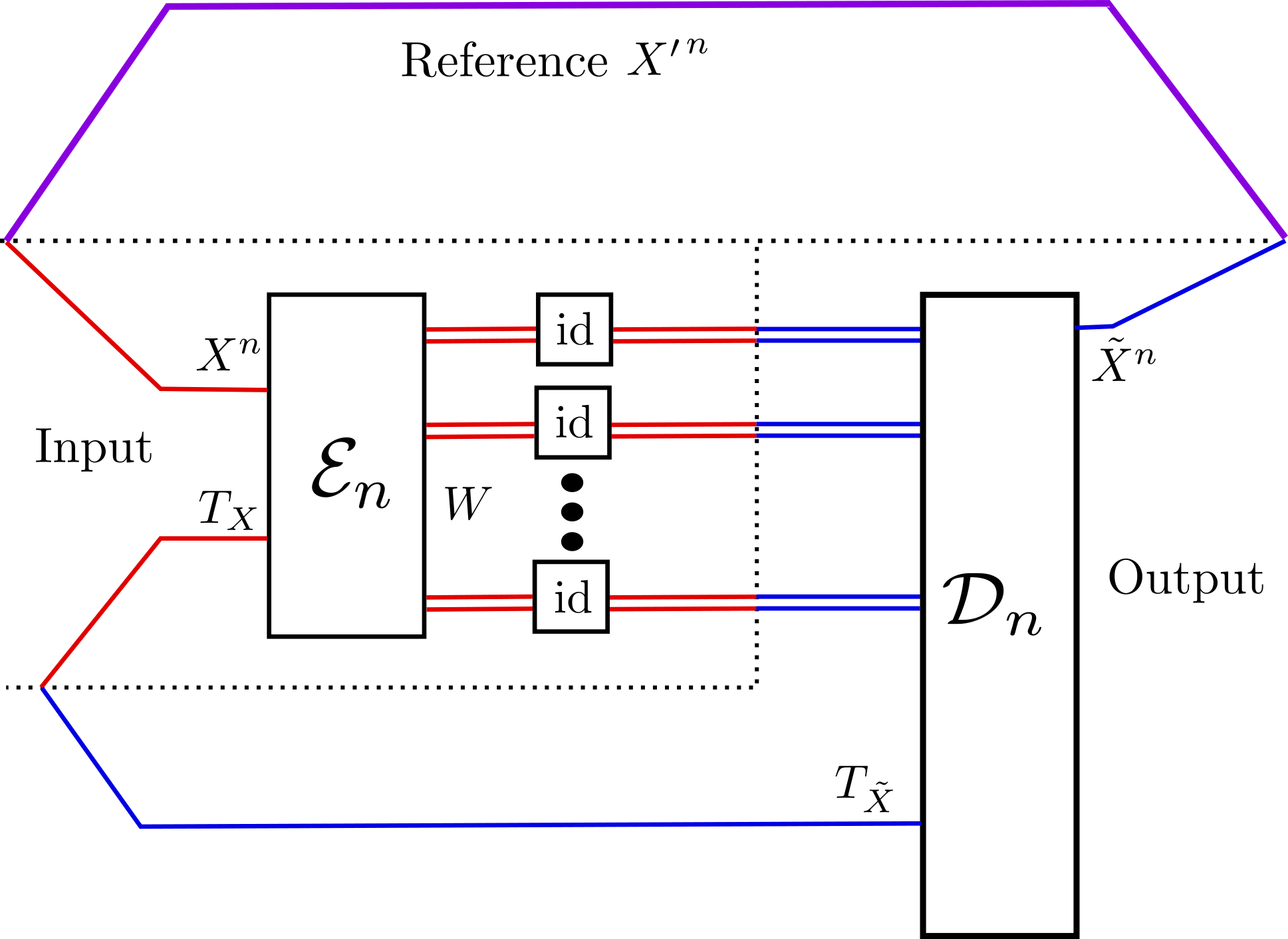}
 \caption{\small The protocol for the entanglement assisted Quantum Information Bottleneck Method. The compression encoding $\mathcal{E}_n$ acts on
  $n$ copies of the state of the system $X$. The $n$ copies of the state of the system $Y$ are used as the \emph{relevance variable} and are entangled with 
  the system $X$. However, as the system $Y$ is not transmitted through the channel, we do not depict the system in the figure. $T_{X}$ and $T_{\tilde X}$ 
  are the two entangled systems 
  that the input and the output of the protocol share to assist the transmission. The output of the compressing map $\mathcal{E}_n$ is classical system $W$ which is transferred to the output section for 
  decompression, $\mathcal{D}_n$, via 
 the noiseless channels ``id". The reference, labeled $X'$, is what purifies the state $\rho_{X}$. The state $\rho_{X}$ is the reduced density operator of 
 the given initial state $\rho_{XY}$.}
  \label{diagram}
\end{figure}

Let us consider the entanglement-assisted protocol with noiseless classical communication illustrated in Fig.~\ref{diagram}. The information
of system $X$ to be compressed is represented by $n$ independently and 
identically distributed (i.i.d.) copies, $\rho_{X}^{\otimes n}$, of the density operator $\rho_{X}$, with a purification $\tau_{X'X}$.  The input, however, is correlated with a system $Y$ which contains our relevant information, and the state is denoted $\rho_{XY}$.
The sender and receiver of the information
share an entangled resource $\Phi_{T_X T_{\tilde X}}$, where the system $T_X$ is with the sender and the system $T_{\tilde X}$ is with the receiver. The sender acts on the 
input state, $\rho_X^{\otimes n}$, and the state of half of the entangled pair, $T_X$, with the compression map $\mathcal{E}_n:=\mathcal{E}_{X^n T_X \to W}$, 
where $W$ is a classical system of size at most $ e^{nr}$, and $r$ is the communication rate. Then, the receiver acts on $W$ 
and the state of the other half of the 
entangled pair, $T_{\tilde X}$, with the decompression channel $\mathcal{D}_n:=\mathcal{D}_{W T_{\tilde X} \to {\tilde X}^n}$. 
The overall action $\mathcal{F}_n := \mathcal{D}_n \circ \mathcal{E}_n$ of the compression-decompression channel, as specified above, defines an \emph{$(n,r)$ quantum rate-distortion code}. 

Given such $\mathcal{F}_n$, we consider the marginal operation, $1 \leq i \leq n$,
\EQ{
&\mathcal{F}_n^{(i)}(\xi_X):=\\
&\operatorname{Tr}_{\tilde X_1,...,\tilde X_{i-1}, \tilde X_{i+1},..., \tilde X_n}[\mathcal{F}_n(\rho_{X}^{\otimes (i-1)} \otimes \xi_X \otimes \rho_{X}^{\otimes (n-i)})].
}
Then, for any $i$, we can define 
\EQ{
\tilde\sigma_{\tilde X_i  Y_i} & :=& \mathcal{F}_n^{(i)}\otimes \mathcal{I}_{Y} (\rho_{XY}), \label{03appmarg}
}
where $\mathcal{I}_{Y}$ is the identity channel acting on the system $Y$, and its partial traces are
\EQ{ 
\tilde\sigma_{\tilde X_i} = \mathcal{F}^{(i)}_n (\rho_X),\quad 
\tilde\sigma_{Y_i}  = \rho_Y.\label{03appsigma1}
}
Using Eqs.~\eqref{03appmarg} and \eqref{03appsigma1} we can define the mutual information
\EQ{
I_i(\tilde X; Y) & := S(\tilde\sigma_{\tilde X_i}) + S(\tilde\sigma_{Y_i}) - S(\tilde\sigma_{\tilde X_i  Y_i}),\label{03appmutu}
}
and its average over many uses of the channel,
\EQ{\label{barI}
\bar{I}_n(\tilde X; Y) := \frac{1}{n} \sum_{i=1}^n I_i(\tilde X; Y).
}
Now, for any $r, J \geq 0$, we call $(r,J)$ an \emph{achievable rate-distortion
pair} (for the Quantum Information Bottleneck Method) if there exists a sequence of $(n,r)$ quantum rate-distortion codes such that 
\EQ{
\lim_{n\to \infty}  \bar{I}_n(\tilde X; Y) >J. \label{discr}
}

Finally, the rate function $R(J)$ is defined as 
\EQ{
R(J):=\inf \{r:(r,J) \mbox{ is achievable} \}. \label{03ratedef}
}

Given these definitions,  we can now express $R(J)$ in a more direct way in terms of \emph{one} copy of the input state $\rho_{XY}$.

\begin{conjecture}
 Consider a memoryless quantum information source $\rho_{XY}$, where system $X$ is to be compressed such that after decompression it retains an amount of correlation $0\leq J \leq 1$ with system $Y$. The asymptotic Bottleneck rate 
 function for entanglement-assisted lossy source coding with noiseless classical communication is given by 
 \EQ{
R(J)=\min_{\mathcal{N}_{X \to \tilde X}:I(\tilde X;Y)_{ \tilde\rho}\geq J} I(X';\tilde X)_{\tilde\tau}, \label{qrate}
}
under the  conjecture that the expression
on the right hand side of Eq.~\eqref{qrate} is convex in $J$. Here, the  quantum mutual information $I(X';\tilde X)$ is evaluated over the state 
\EQ{
\tilde\tau_{X'\tilde X}:=(\mathcal{I}_{X'}\otimes\mathcal{N}_{X\to\tilde X})(\tau_{X'X}),\label{tau}
}
such that $\tau_{X'X}$ is the purification of the reduced state $\rho_X$ of the state $\rho_{XY}$, and 
\EQ{
\tilde{\rho}_{\tilde{X}Y}:=(\mathcal{N}_{X\to\tilde X}\otimes\mathcal{I}_{Y})(\rho_{XY})}.
\end{conjecture}
While we are currently unable to remove the extra convexity assumption, we have verified it in numerical examples which are discussed in Appendix~\ref{numerics}.
\begin{proof}
Here we follow the approach of \cite[Theorem 2]{Datta2013}. We temporarily denote the right-hand side of Eq.~(\ref{qrate}) as
\begin{equation}\label{M}
M(J) := \min_{\mathcal{N}_{X \to \tilde X}:I(\tilde X;Y)_{\tilde\rho}\geq J} I(X';\tilde X)_{\tilde\tau}.
\end{equation}
%
We need to show achievability of the rate , $R(J) \leq M(J)$, as well as optimality, $R(J)\geq M(J)$. We start with optimality. Let $(r,J)$ be an achievable rate-distortion pair and $\mathcal{F}_n$ a corresponding sequence of codes.
We have for large $n$,
\EQ{
 nr &\geq  S(W) \\
 &\geq S(W|T_{\tilde X})\\
 &\geq S(W|T_{\tilde X})-S(W|X'^nT_{\tilde X})\\
 & = I(W;X'^n|T_{\tilde X})\\
 &=I(W;X'^n|T_{\tilde X})+ I(X'^n;T_{\tilde X})\\
 &=I(WT_{\tilde X};X'^n)\\
 &\geq I({\tilde X}^n;X'^n)\\
 &\geq \sum_{i=1}^n I({\tilde X}_i;X'_i)\\
 &\geq \sum_{i=1}^n M(I_i({\tilde X};{ Y}))\\
 &= n\sum_{i=1}^n \frac{1}{n} M(I_i({\tilde X};{ Y}))\\
 &\geq n M( \bar{I}_n(\tilde X;Y))\\
 &\geq n M(J). \label{ineq}
}
The first inequality follows from the fact that the entropy of the uniform distribution, $nr$, is the upper bound of $S(W)$. The second inequality follows because entropy is nondecreasing under conditioning. The third inequality follows because the state of the system $WX'^n T_{\tilde X}$ is separable with respect to the classical system $W$ and therefore $S(W|X'^n T_{\tilde X}) \geq 0$ \cite[footnote 10]{Cerf1997}. The first equality follows from the definition of mutual information. The second equality follows since the state $\Phi_{T_X T_{\tilde X}}$ is in a tensor product with the state of the remaining input and therefore $I(X'^n; T_{\tilde X}) =0$.  
In the third equality we use the chain rule for mutual information. The fourth inequality follows from the data processing inequality.
The fifth inequality follows from superadditivity of quantum mutual information \cite[Lemma 15]{Datta2013}. The sixth inequality follows from the definition of $M(J)$, where we use the channel $\mathcal{N}_{X \rightarrow \tilde X} = \mathcal{F}_n^{(i)}$.
In the seventh inequality we have used  conjecture of convexity of $M$. 
Finally the last inequality follows for large $n$ from Eq.~(\ref{discr}), using that $M(J)$ is a nondecreasing
function of $J$. The rate function is nondecreasing in $J$, because for any $J'>J$ the domain of minimisation in Eq.~(\ref{qrate}) becomes smaller, which implies that the rate function can only become larger. Thus, since $(r,J)$ was arbitrary, (\ref{ineq}) implies $R(J) \geq M(J)$.

Achievability follows from an application of the Quantum Reverse Shannon Theorem (QRST) \cite{Bennett2014,Berta2011,Devetak2006}. The general form of the QRST states
that a quantum channel can be simulated by an unlimited amount of shared entanglement and an amount of classical
communication equal to the channel's entanglement-assisted classical capacity. Specifically, in our case, fix $J >0$ and let
$\mathcal{N}_{X \rightarrow \tilde X}$ be the optimum channel at which the minimum in Eq.~(\ref{M}) is attained. For a given $\epsilon >0$, we use the QRST to construct a sequence of channels 
$\mathcal{F}_n = \mathcal{D}_n \circ \mathcal{E}_n$ such that they are close to an $n$-fold application of $\mathcal{N}_{X \rightarrow \tilde X}$, in the sense that 
\EQ{ \label{sigman}
\| \tilde\sigma_{X'^n \tilde X^n } - \tilde\tau^{\otimes n}_{X' \tilde X} \|_1 \leq \epsilon,
}
where $\tau_{X' \tilde X}$ is defined in Eq.~(\ref{tau}) and
\begin{equation}\label{sigma}
\tilde\sigma_{X'^n \tilde X^n}:=\mathcal{I}_{X'^n}\otimes \mathcal{F}_n (\tau_{X'X}^{\otimes n}).
\end{equation}
According to the QRST we can construct such a sequence of channels, $\mathcal{F}_n$, using classical communication at rate $r= I(X'; \tilde X)_{\tilde\tau}$, given an unlimited 
amount of entanglement for a known tensor power input, which is our setting, \emph{c.f.} \cite[Fig.2, top right]{Bennett2014}\footnote{Notice that the QRST in Fig. 2 of \cite{Bennett2014} was done in two ways, a so-called
\emph{feedback} and a \emph{non-feedback} simulation, respectively. The former means that the environment of the simulated channel is in the possession of the sender, while the latter means some part of the environment of the channel 
will end up on the receiver's side. In both cases, if unlimited entanglement is available, the rate is given by the single-letter formula,  the above mutual information. In the case of the non-feedback simulation of 
the channel $\mathcal{N}_{X \to \tilde X}$, the feedback
simulation protocol \cite[Theorem 3(a)]{Bennett2014} is used, namely the isometry $\mathcal{U}_\mathcal{N}^{X\to E\tilde X}$ composed with the isometry $V^{E^n \to E_A E_B}$, is simulated by QRST, such that systems 
$E_A$ and $X$ are with the sender and the systems $E_B$ and $\tilde X$ are with the receiver. Tracing out the environment from this channel
gives the desired channel, $\mathcal{N}_{X\to\tilde X}$. The result listed in Fig. 2 of \cite{Bennett2014} can be obtained from Theorem~3(b) there as follows. In the presence of an unlimited shared entanglement, 
condition (21) in \cite{Bennett2014} trivially holds and the maximum in Eq.~(23) there is always attained by the first of the two terms. To obtain the desired expression of the mutual information from (20) and (23) 
there, it is enough to choose $E^n, E_A$ and $E_B$ such that the isometry $V$ acts trivially and we have $I(R^n:B^n E_B)= I(R^n:B^n)$. Finally, the single letter formula $I(R:B)$ that we use follows from the fact that 
we have a known tensor-power input.}
\footnote{Notice that the rates shown in Fig. 2 of \cite{Bennett2014} are quantum communication rates which is why they are different by a factor of 1/2 from what we use. The extra factor drops because 1-qubit channels
can be simulated with 2 bit classical channels in the presence of entanglement using the teleportation protocol.}.
From Eq.~\eqref{sigman} and the fact that $I(\tilde X;Y)\geq J$ for the channel $\mathcal{N}_{X \rightarrow \tilde X}$, one can show (see Lemma~\ref{lemmaB} in Appendix~\ref{average}) that Eq.~\eqref{discr} is fulfilled with
$J- \delta$ instead of $J$, where $\delta \rightarrow 0$ as $\epsilon \rightarrow 0$. Hence, $(r, J - \delta)$ is achievable and $R(J - \delta) \leq M(J)$. 

We have now shown that
\begin{equation}
\forall \delta >0 \; :\; M(J - \delta) \leq R(J- \delta) \leq M(J).
\end{equation}
From this it follows that
\begin{equation}
\lim_{\delta \searrow 0} R(J -\delta) = M(J).
\end{equation}
Since $R$ is nondecreasing, by Eqs.~\eqref{discr} and \eqref{03ratedef}, and $M$ is continuous (a property that follows from the conjecture of convexity), $M(J) = R(J)$ for all $J$. 

\end{proof}

\emph{Remark 1}: In Eq.~(\ref{qrate}) the Quantum Reverse Shannon Theorem is used, which in addition uses shared entanglement (``entanglement assistance'') to generate a protocol that achieves the rate. However, the requirements for the Quantum Reverse Shannon Theorem are much more stringent than 
those of the Bottleneck method. Therefore, it might be possible to find a rate function without entanglement assistance. This is still an open problem.

{\emph{Remark 2}: The systems $XY$ and $XX'$ are understood to be in separate setups; the states $\tilde\rho_{\tilde X Y}$ and $\tilde\tau_{X' \tilde X}$ are not partial traces of a common parent state,  but they involve an identical channel $\mathcal{N}$ (``two separate experiments using the same apparatus'').
The QRST applies only to the system $XX'$. To prove Eq.~\eqref{qrate}, { these two setups need to be connected}, which is done in Lemma~\ref{lemmaB} in Appendix~\ref{average}.}

\emph{Remark 3}: Generally, rate distortion functions obtained from the definition in Eq.~(\ref{03ratedef}) are non-increasing functions of the distortion,
whereas the RHSs of Eqs.~(\ref{qrate}) and (\ref{cIBM}) are both non-decreasing functions. This is not a fundamental difference between RDT and the Bottleneck method. It is merely due to the fact that in the IBM the constraint of minimisation is chosen to be the amount of correlation preserved, $I(\tilde X;Y)\geq J$, while in RDT  the constraint of minimisation is the average loss of information, $\langle d(X;\tilde X) \rangle\leq D$, for some fixed $D$ and some distortion measure $d(X;\tilde X)$, characterising the noise introduced by a channel $\mathcal{N}_{X \to \tilde X}$ to a state $\rho_X$. 

The formulation of the IBM can easily be changed to using a  constraint on the loss of correlation such as $I(X;Y)-I(\tilde X;Y)\leq D$, 
in which case the rate function is a non-increasing function of $D$. Rather than changing it, the  structure of the minimisation constraint is kept in line 
with the classical IBM. 

\emph{Remark 4}: As  discussed above, in a lossy compression-decompression protocol the minimisation is performed over all  channels  satisfying a certain criterion, see 
Eq.~(\ref{qrate}).
Although the functions $I(\tilde X; X')$ and $I(\tilde X; Y)$ are convex in the channel, the optimisation problem in Eq.~(\ref{qrate}) is not of convex type due to the sign of the inequality in the constraint ($\geq$ rather 
than $\leq$). Just as in the classical case, it is a so-called ``reverse convex problem'' (see, e.g., \cite{Singer07}), for which many of the standard results, such as strong duality or convexity of the resulting
function in $J$, are not known to hold or apply only in a weaker form. Nevertheless, we present numerical evidence for convexity of the RHS in Appendix~\ref{numerics}. We also show that the minimum in 
Eq.~\eqref{qrate} is actually attained at the boundary $I(\tilde X; Y)_{\tilde\rho} = J$ (see Lemma~\ref{lemma:boundarY} in Appendix~\ref{BA}), which makes it useful to search for local optima with the method of
Lagrange multipliers, as we show in the next section.

\section{The Optimisation Problem for the Quantum Information Bottleneck Method\label{lagra}}

We now proceed to the optimisation problem.
In order to formulate the Lagrangian corresponding to the quantum counterpart of Eq.~(\ref{clagrange}), we need to choose a suitable way of representing the channel $\mathcal{N}$, e.g., the Kraus operators or the Choi-Jamio\l{}kowski representation. 
It turns out that indeed the most compact and convenient way to compute the derivatives of the Lagrangian is with respect to the Choi-Jamio\l{}kowski representation defined as
\EQ{
\Psi_{X' \tilde X}:=\big( \mathcal{I}_{X'} \otimes \mathcal{N}_{X \to {\tilde X} }\big) (\tilde \Phi_{X'X}),
}
where $\tilde \Phi_{X'X}:= \sum_{i,j=0}^{d-1}\ketbra{i}{j}_{X'} \otimes \ketbra{i}{j}_{X}$. 
For the rate function given in Eq.~(\ref{qrate}) one can write the Lagrangian
\EQ{
\mathcal{L}&:=& I(X';\tilde X)_{\tilde\tau} -\beta I(\tilde X; Y)_{\tilde\rho} -\operatorname{Tr}_{X \tilde X}(\Psi_{X \tilde X}^{t_{X}}(\Lambda_{X} \otimes I_{\tilde X} )), \label{lag}
}
{where $t_X$ denotes the partial transpose with respect to the basis $\{\ket{i}_X\}_{i}$ on $X$;} here $\beta$ is the Lagrange multiplier for the constraint of minimisation and the Hermitian operator $\Lambda_{X}$ is the Lagrange multiplier to guarantee that the channel is a 
completely positive trace preserving map. 
The states  in Eq.~(\ref{lag}) can be written as  functions of the Choi-Jamio\l{}kowski state of the channel  $\Psi_{X\tilde X}$. 
The joint state $\tau_{X' X}$ in Eq.~(\ref{lag}) can be written as 
\EQ{
\tilde\tau_{X'\tilde X}&=  \operatorname{Tr}_X \big\{ \Psi_{X \tilde X}^{t_X} \tau_{X' X}\big\}\\
&= (\rho^X_{X'}\otimes I_{\tilde X})^{1/2}\Psi_{X' \tilde X}^{t_{X'}}(\rho^X_{X'}\otimes I_{\tilde X})^{1/2}, \label{03choijoint}
}
where $\rho^X_{X'}$ is the same state as $\rho_X$ acting on the Hilbert space $\mathcal{H}_{X'}$ of the system $X'$.

By similar considerations, one can show that the joint state \begin{equation}\label{rho}
\tilde\rho_{\tilde XY}:=(\mathcal{N}_{X \to \tilde X} \otimes \mathcal{I}_y) (\rho_{XY})
\end{equation}
 can be written as 
\EQ{
\tilde\rho_{\tilde X Y}=&\operatorname{Tr}_{X}\big\{\Psi_{X \tilde X}^{t_{X}}\rho_{XY}\big\}
\nonumber \\
=& \operatorname{Tr}_{X'Y'}\Big\{ (\rho^{XY}_{X'Y'}\otimes I_{\tilde X  Y})^{1/2}(\Psi_{X' \tilde X}^{t_{X'}} \otimes \Phi^{t_{Y'}}_{Y'  Y})\\
&(\rho^{XY}_{X'Y'}\otimes I_{\tilde X  Y})^{1/2}\Big\}, \label{rhotildeXY}
}
where we have chosen a maximally entangled state $ \Phi_{Y'  Y}$ and a corresponding transpose $t_{Y'}$.
In the third term of Eq.~\eqref{lag} the dependence on the channel state is already explicit.

Let
\EQ{
D_{X \tilde X}^{\beta Y}:=&
\beta \log \tilde\rho_{\tilde X} \otimes I_{X}\\
&-\beta\operatorname{Tr}_{Y}\big\{(\rho_{X}^{-1/2} \rho_{X Y}\rho_{X}^{-1/2} (\log \tilde\rho_{\tilde X Y}\otimes I_{X }) \big\},\label{03qdis}
}
and 
\EQ{
\tilde\Lambda_{X} :=\rho_{X}^{-1/2} \Lambda_{X}\rho_{X}^{-1/2}.
}
Taking the derivative of the Lagrangian in Eq.~(\ref{lag}) with respect to the channel and setting it
to zero (for details, see Appendix~\ref{BA}) gives the optimum channel as 
\EQ{\label{sol}
\Psi_{X \tilde X }^{t_{X}}=& (\rho_{X} \otimes I_{\tilde X })^{-1/2}\\
&e^{ \log \tilde\rho_{\tilde X}\otimes  I_{X}-D_{X \tilde X}^{\beta Y}+\tilde \Lambda_{X} \otimes I_{\tilde X}}(\rho_{X} \otimes I_{\tilde X })^{-1/2}.
}
Note that this determines $\Psi_{X \tilde X}^{t_{X}}$ implicitly since it also appears on the right hand side of this equation in $\tilde\rho_{\tilde X}$ and in the definition of $D_{X \tilde X}^{\beta Y}$ (for details, see Appendix~\ref{BA}). {In order to find the optimum channel, {Eq.~\eqref{sol} needs to be solved iteratively} for $\Psi_{X\tilde X}^{t_{X}}$. 
The solution also depends on the unknown Lagrange multipliers $\Lambda_{X}, \beta$ associated with the constraints of the problem; these need to be determined in a further optimisation step. We comment on a possible algorithm to that end in Appendix~\ref{BA}.}

Eq.~(\ref{sol}) reduces to its classical counterpart in Eq.~(\ref{clopt}) in the case of diagonal density operators. 
To see this, consider the diagonal case where the density operators reduce to probability distributions. From Eq.~(\ref{sol}) it follows that 
\EQ{
P(\tilde x|x)=&\frac{1}{P(x)}\exp\Big\{ \log P(\tilde x)-\beta \big( \log P(\tilde x)\\
&-\sum_y P(y|x)\log P(\tilde x y)\big)+\frac{\lambda(x)}{P(x)}\Big\}, \label{03reduction}
}
with $\lambda(x)$ being the same normalisation Lagrange multiplier as in Eq.~(\ref{clagrange}). Notice that since $H(Y|X=x)=-\sum_y P(y|x)\log P(y|x)$ depends 
only on $x$ but not on $\tilde x$, it can be absorbed into  $\lambda(x)$. Defining
\EQ{
\tilde \lambda(x):=\frac{\lambda(x)}{P(x)}-\beta H(Y|X=x)-\log P(x)~,
}
Eq.~(\ref{03reduction}) becomes
\EQ{
P({\tilde x}|x)=P({\tilde x})e^{-\beta D(P(Y|x)||P(Y|\tilde x))+\tilde \lambda(x)}, 
}
which is the same classical channel as Eq.~(\ref{clopt}), with all the extra terms being absorbed into the normalisation factor. This also shows that 
$D_{X\tilde X}^{\beta y }$ is a quantum operator corresponding to the distance measure in the classical Bottleneck method. The idea of distance operators has been 
used in a number of quantum information processing tasks \cite{Wilde2013a,Datta2013a}, however the $D_{X \tilde X}^{\beta Y}$ is particular to the present setting.
Eq.~(\ref{sol}) can be 
used in principle to compute numerical values of quantum channels using iterative algorithms, akin to their classical counterparts by methods 
introduced by Blahut and Arimoto \cite{Blahut1972, Arimoto1972}.

Finally, we would like to remark that a previous result in the domain of quantum information theory was inspired by the classical Information Bottleneck Method \cite{Still2016}, and indeed some of the technical steps in Section III of our paper are very similar. Here we clarify the difference between our contribution and that of \cite{Still2016}. To reflect on our contribution, consider that in classical communication, Shannon's channel capacity theorem gives a single-letter formula that fully characterises a communication channel. This allows one to interpret the mutual information between the input and the output of a channel as a communication rate, as was done in~\cite{Tishby1999}. In quantum communication theory the situation is different since quantum channels (except for very few cases) are typically not characterised by such single-letter quantities. Therefore, in order to give a communication theoretic interpretation to an entropic quantity such as the one that we (and \cite{Still2016}) use, one has to prove that the quantity corresponds to a rate in a communication scenario by giving a coding scenario to which the rate refers. We have done  this, up to the previously stated conjecture, by designing a setup that is compatible with the Quantum Reverse Shannon Theorem (notice, e.g., the unlimited amount of entanglement), and by proving the lemma given in Appendix A. This lemma shows that in our setting the Quantum Reverse Shannon Theorem can be invoked. In Ref. \cite{Still2016}, the focus there was on applying a quantum extension of the Information Bottleneck Method to predictive filtering, whereas we focus here on the communication scenario.

\section{Conclusion and Outlook}

This paper  introduced the quantum extension of the Information Bottleneck Method. This method compresses the data such that only the information relevant with respect to some given variable is preserved. We derive a lower bound to the compression rate of relevant quantum information. The problem was formulated as a communication channel problem and the rate was shown to be achievable by explicitly constructing a channel achieving it. Just like in the classical case, the compression rate of the quantum Information Bottleneck Method is lower than that given by quantum rate distortion theory. 
Several conceptual issues arose from the structural differences between the mathematical formalism of quantum theory and 
classical probability theory which were discussed and solutions were presented.

Some open questions remain. Our proof of Eq.~\eqref{qrate} relied on an unproven conjecture (convexity of the expression on the right hand side of Eq.~\eqref{qrate} in $J$). While this seems to be fulfilled in examples (cf. Appendix~\ref{numerics}), a proof of this property is currently missing.

{In Appendix~\ref{numerics} a simple algorithm is used to compute the optimum channel and thus the rate function $R(J)$ in low dimensional systems; but for systems of realistic size a more efficient algorithm would be required. This might be based on numerically solving the implicit equation \eqref{sol}.} 

\section*{Acknowledgment}
The authors are grateful to Henning Bostelmann for discussions on the subject and for assistance with the numerics in Appendix~\ref{numerics}. Further, D.~C.~would like to thank Roger Colbeck and Robert Koenig for helpful discussions.
P.~K.~would like to thank David Sutter and Renato Renner for comments on earlier versions of the manuscript.

S.~S.~is supported by the National Science Foundation of China through Grant No.~11675136, the Hong Kong Research Grant Council through Grant No.~17326616, the EPSRC National Quantum Technology Hub in Networked Quantum Information Technologies. This publication was made possible through the support of a grant from the John Templeton Foundation. The opinions expressed in this publication are those of the
authors and do not necessarily reflect the views of the John Templeton Foundation.
P.~K.~acknowledges funding from the Swiss National Science Foundation (through the National Centre of Competence in Research ``Quantum Science and Technology'' and through Project No. 200020\_165843), from the European Research Council (grant 258932) and through COST MP1209 "Thermodynamics in the quantum regime".

K.W. acknowledges support through EPSRC grant I013717/1.

{\small

}

\appendices

\section{Lemmas for the proof of the Bottleneck rate function \label{average}}
The following lemma is relevant for the proof of achievability of the communication rate. It has the same application as in Theorem 19 of \cite{Winter2002} and Lemma 1 of \cite{Datta2013}, but has been adapted to the distortion criterion for the quantum Bottleneck method.

\begin{lemma}\label{lemmaB}
\label{03apprevshan}
Let $\eta(\lambda) := - \lambda \log \lambda$. There exists a constant $k >0$ depending only on the dimension of $\mathcal{H}_{\tilde X Y}$ such that the following holds:

Let $0 < J \leq I(X;Y)_\rho$ be fixed. Let a quantum channel $\mathcal{N}_{X \to \tilde X}$ be such that if we apply the channel to the system $X$ and an identity channel $\mathcal{I}_{Y}$ on the system $Y$
the effect will meet the condition $I(\tilde{X};Y)_{\tilde\rho}\geq J$, where $\tilde\rho_{\tilde X Y}$ is given by (\ref{rho}).
Further, let $\mathcal{F}_n$ be a sequence of quantum channels from the space of density matrices
$\mathfrak{D}(\mathcal{H}_{X'X}^{\otimes n})$ to $\mathfrak{D}(\mathcal{H}^{\otimes n}_{X' \tilde X})$ such that 
\begin{equation}\label{tausigma}
\| \tilde\sigma_{X'^n \tilde X^n} - \tilde\tau^{\otimes n}_{X'\tilde X} \|_1 \leq \epsilon
\end{equation}
for some $0\leq \epsilon < \frac{1}{e}$ and large enough $n$, where $\tilde\tau_{X' \tilde X}$ and $\tilde\sigma_{X'^n \tilde X^n}$ are given by (\ref{tau}) and (\ref{sigma}), respectively.

Then, for large enough $n$ and $\delta:=k  \eta(\epsilon)$, we have 
\EQ{
 \bar{I}_n(\tilde X; Y)_{\tilde\sigma} \geq J - \delta. \label{03appfst}
}

\end{lemma}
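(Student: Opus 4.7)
The plan is to reduce the given global trace distance bound to a per-slot bound on the marginals in the relevant systems $\tilde X_i Y_i$, and then invoke continuity of the von Neumann mutual information. I first need to transfer the hypothesis, which is formulated on $X'^n \tilde X^n$, into information about $\tilde X^n Y^n$, so my first step is to build a bridge between the reference system $X'$ (which purifies $\rho_x$) and the side-information system $Y$ (which is entangled with $X$ in $\rho_{xy}$).

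More precisely, both $\tau_{x'x}$ and $\rho_{xy}$ have the same $X$-marginal $\rho_x$, and $\tau_{x'x}$ is pure. Therefore there exists a CPTP map $\Gamma_{x'\to y}$ with
\begin{equation}
(\mathcal{I}_x\otimes \Gamma_{x'\to y})(\tau_{x'x}) = \rho_{xy}.
\end{equation}
Applying $\mathcal{N}_{x\to\tilde x}$ on the $X$ factor (which commutes with $\Gamma$) yields $(\Gamma_{x'\to y}\otimes \mathcal{I}_{\tilde x})(\tau_{x'\tilde x}) = \rho_{\tilde x y}$. Similarly, tracing $\sigma_{x'^n \tilde x^n}$ over all slots except the $i$-th gives $\sigma_{x'_i \tilde x_i} = (\mathcal{I}_{x'}\otimes \mathcal{F}_n^{(i)})(\tau_{x'x})$, using that $\mathrm{Tr}_{x'}\tau_{x'x}=\rho_x$, and then $(\Gamma_{x'\to y}\otimes \mathcal{I}_{\tilde x})(\sigma_{x'_i \tilde x_i}) = \sigma_{\tilde x_i y_i}$.

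Next I would chain monotonicity of trace distance under partial trace and under CPTP maps. Starting from the hypothesis \eqref{tausigma}, partial tracing down to the $i$-th slot gives $\|\sigma_{x'_i\tilde x_i}-\tau_{x'\tilde x}\|_1\le\epsilon$; applying $\Gamma_{x'\to y}\otimes\mathcal{I}_{\tilde x}$ preserves the bound, so
\begin{equation}
\|\sigma_{\tilde x_i y_i}-\rho_{\tilde x y}\|_1 \le \epsilon \quad\text{for every } i.
\end{equation}
Now I would apply the Fannes--Audenaert inequality to each of the three entropies $S(\tilde X_i)$, $S(Y_i)$, $S(\tilde X_i Y_i)$ whose signed sum is $I_i(\tilde X;Y)$. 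Because the marginals of the two states on any subsystem are also $\epsilon$-close in trace distance, and because $\epsilon<1/e$ so that both $\epsilon\log d$ and the binary entropy $h(\epsilon)$ are bounded by a constant multiple of $\eta(\epsilon)$, one obtains
\begin{equation}
\bigl|I_i(\tilde X;Y)_{\sigma_{\tilde x_i y_i}} - I(\tilde X;Y)_{\rho_{\tilde x y}}\bigr| \le k\,\eta(\epsilon),
\end{equation}
with a constant $k$ depending only on $\dim\mathcal{H}_{\tilde x y}$.

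Combining this with the hypothesis $I(\tilde X;Y)_{\rho_{\tilde x y}}\ge J$ gives $I_i(\tilde X;Y)\ge J-k\eta(\epsilon)$ for every $i$, and averaging over $i$ yields the claimed bound with $\delta = k\eta(\epsilon)$. The only nontrivial step is the construction of $\Gamma_{x'\to y}$ linking the purifying reference to the side-information system; once this is in place, the rest is bookkeeping with monotonicity of trace distance and a careful Fannes--Audenaert estimate whose constants can be absorbed into $k$. I expect the main obstacle to be keeping the constant $k$ uniform in $i$ and in the choice of channel $\mathcal{F}_n$, which is achieved by having it depend only on the fixed dimension of $\mathcal{H}_{\tilde x y}$ rather than on any $n$-dependent quantity.
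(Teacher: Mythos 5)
Your proposal is correct and follows essentially the same route as the paper's proof: you transfer the closeness hypothesis from the reference system $X'$ to the side-information system $Y$ (the paper does this via a unitary relating the purifications $\tau_{x'x}$ and $\rho_{x'y'xy}$ followed by a partial trace, which is exactly the Stinespring dilation of your channel $\Gamma_{x'\to y}$), then use monotonicity of the trace norm to get per-slot closeness $\|\sigma_{\tilde x_i y_i}-\rho_{\tilde x y}\|_1\le\epsilon$, and finish with Fannes-type continuity of the three entropies and the bound $\epsilon\le\eta(\epsilon)$ for $\epsilon<1/e$. The only cosmetic difference is that you bound each $I_i$ individually before averaging, whereas the paper bounds the average directly; both are fine.
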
 

\begin{proof}

Adding and subtracting $I(\tilde X; Y)_{\tilde\rho}$ to the left hand side of Eq.~(\ref{03appfst}) and using the triangle inequality, we obtain
\EQ{
  \bar{I}_n(\tilde X; Y)_{\tilde\sigma}&=&\vert I(\tilde X; Y)_{\tilde\rho} -  \big(I(\tilde X; Y)_{\tilde\rho}-\bar{I}_n(\tilde X;  Y)_{\tilde\sigma}\big)  \vert \\
&\geq& \vert I(\tilde X; Y)_{\tilde\rho} \vert - \vert I(\tilde X ; Y)_{\tilde\rho}-\bar{I}_n(\tilde X; Y)_{\tilde\sigma}\vert.
}
Since we assumed that $I(\tilde{X};Y)_{\tilde\rho} \geq J$, what we need to show is that
\begin{equation}
\vert  I(\tilde X ;  Y)_{\tilde\rho} -\bar{I}_n(\tilde X ;  Y)_{\tilde\sigma}\vert \leq k  \eta(\epsilon) =:\delta 
\end{equation}
with some $k>0$ and $\eta$ as above. Inserting $\bar{I}_n$ given in Eq.~\eqref{barI}, we have
\EQ{
\nonumber \vert  I(\tilde X ;Y)_{\tilde\rho}-\bar{I}_n(\tilde X ; Y)_{\tilde\sigma} \vert =&\Big\vert  \frac{1}{n}  n \Big( S(\tilde\rho_{\tilde X}) + S(\rho_{ Y}) \\
&- S(\tilde\rho_{\tilde X  Y})\Big) -\frac{1}{n} \sum_{i=1}^n \Big( S(\tilde\sigma_{\tilde X_i}) \\
&+ S(\tilde\sigma_{ Y_i}) -S(\tilde\sigma_{\tilde X_i  Y_i}) \Big)  \Big\vert.
}
Hence, it suffices to show that for all $1\leq i \leq n$,
\EQ{
\nonumber \vert S(\tilde\rho_{\tilde X})&+ S(\rho_{ Y}) - S(\tilde\rho_{\tilde X Y})-S(\tilde\sigma_{\tilde X_i})\\
&-S(\tilde\sigma_{ Y_i})+S(\tilde\sigma_{\tilde X_i  Y_i})   \vert \leq k  \eta(\epsilon). \label{suff}
}
In particular, we will prove bounds of the above type on
$\vert S(\tilde\rho_{\tilde X})-S(\tilde\sigma_{\tilde X_i})  \lvert$ , 
$ \vert S(\tilde\rho_{\tilde X  Y})-S(\tilde\sigma_{\tilde X_i  Y_i}) \lvert$
and $\vert S(\rho_{ Y})-S(\tilde\sigma_{Y_i}) \vert $ for all $i$.
We start with the first of these. 
To prove this inequality, we recall that  $\tau_{X' X}$ is a purification of $\rho_x$.
Now let $\rho_{X' Y' XY}$ be a purification of $\rho_{XY}$, then there is a Hilbert space $ \mathcal{\hat H}$ and a unitary 
$U : \mathcal{H}_{X'} \oplus \mathcal{\hat H} \rightarrow \mathcal{H}_{X'Y'Y}$ such that
\begin{equation}
(I_X \otimes U)\tau_{X'X} (I_X \otimes U)^{\dagger} =\rho_{X'Y'XY},
\end{equation}
where $\tau_{X' X}$ is extended to the orthogonal complement of $\mathcal{H}_{X'}$ by zeros.
Then,  (\ref{tausigma}) implies
\begin{equation}
\| \tilde\rho^{\otimes n}_{X' Y' \tilde X  Y }- (I_X \otimes U^{\dagger})^{\otimes n } \tilde\sigma_{X'^n \tilde X^n}(I_X \otimes U)^{\otimes n} \|_1 \leq \epsilon,
\end{equation}
where $\tilde\rho_{X' Y' \tilde X  Y } = (\mathcal{N}_{X \rightarrow \tilde X} \otimes \mathcal{I}_{X'Y'Y})(\rho_{X'Y'X Y})$. Further, one computes that
\begin{equation}
(I_X \otimes U^{\dagger})^{\otimes n } \tilde\sigma_{X'^n \tilde X^n}(I_X \otimes U)^{\otimes n} = \tilde\sigma_{X'^n Y'^n \tilde X^n Y^n},
\end{equation}
\begin{equation}
\tilde\sigma_{X'^n Y'^n \tilde X^n  Y^n} := (\mathcal{I}_{X'^n Y'^n Y^n} \otimes \mathcal{F}_n)(\rho^{\otimes n}_{X'Y'XY}).
\end{equation}
To summarize, we found that
\begin{equation}
\| \tilde\rho^{\otimes n}_{X' Y' \tilde X  Y }- \tilde\sigma_{X'^n Y'^n \tilde X^n  Y^n} \|_1 \leq \epsilon.
\end{equation}
Using monotonicity of the trace norm under partial trace, we find that
\begin{equation}
 \|  \tilde\rho_{X' Y' \tilde X  Y}-\tilde\sigma_{X'_i Y'_i \tilde X_i Y_i} \|_1 \leq \|  \tilde\rho^{\otimes n}_{X' Y' \tilde X Y}-\tilde\sigma_{X'^n Y'^n \tilde X^n   Y^n} \|_1.
\end{equation}
Moreover, 
\begin{equation}
\| \tilde\rho_{ \tilde X } -\tilde\sigma_{ \tilde X_i }\|_1 \leq \| \tilde\rho_{X' Y' \tilde X Y} -\tilde\sigma_{X'_i Y'_i \tilde X_i  Y_i} \|_1. 
\end{equation}
again using the monotonicity of the trace norm under partial trace.
This implies that $\| \tilde\rho_{\tilde X} -\tilde\sigma_{\tilde X_i}\|_1 \leq \epsilon$.

Now, by the Fannes Inequality the following bound holds for all $1\leq i \leq n$:
\EQ{
\nonumber \vert S(\tilde\rho_{\tilde X}) -S(\tilde\sigma_{\tilde X_i}) \vert \leq &  \log(k')  \|  \tilde\rho_{\tilde X}-\tilde\sigma_{\tilde X_i} \|_1 \\
&+ \frac{1}{\log(2)}  \eta \big( \| \tilde\rho_{\tilde X} -\tilde\sigma_{\tilde X_i}\|_1 \big), 
}
where $k'$ is the dimension of $\mathcal{H}_{\tilde X}$. Then, using the bound $\| \tilde\rho_{\tilde X}-\tilde\sigma_{\tilde X_i}  \|_1 \leq \epsilon$, we find for all $1\leq i \leq n$,
\begin{align}
\vert S(\tilde\rho_{\tilde X})-S(\tilde\sigma_{\tilde X_i}) \vert &\leq  \epsilon\log(k')  + \frac{\eta(\epsilon)}{\log(2)} \nonumber\\
& \leq \hat k'  \eta(\epsilon), \label{A1}
\end{align}
where the last inequality uses the fact that  $\eta(\epsilon) \geq \epsilon$ for $0\leq \epsilon < \frac{1}{e}$, and where $\hat k'$ is defined in terms of the constants in the first inequality, including $k'$.

With a similar method, one can prove that for all $1\leq i \leq n$,
\begin{align}
\vert  S(\tilde\rho_{\tilde X  Y}) -S(\tilde\sigma_{\tilde X_i  Y_i})\vert & \leq  \hat k''  \eta(\epsilon),\label{B1}\\
\vert  S(\rho_{ Y}) -S(\tilde\sigma_{ Y_i})\vert & \leq  \hat k'''  \eta(\epsilon),\label{C1}
\end{align}
where $\hat k'', \hat k'''$ also depend on the dimensions of $\mathcal{H}_{\tilde X  Y}$ and $\mathcal{H}_{ Y}$, respectively.

Combining Eq.~(\ref{A1}), (\ref{B1}) and (\ref{C1}) we find for all $1\leq i \leq n$ that Eq.~(\ref{suff}) holds for a constant $k$ that 
includes the constants in the three estimates above and depends only on the dimension of $\mathcal{H}_{\tilde X  Y}$.

Hence, we obtain for large enough $n$ that $\bar{I}_n(\tilde X;  Y)_{\tilde\sigma} \geq J - k  \eta(\epsilon)$.
\end{proof}

\section{Numerical examples}\label{numerics}

The aim of this appendix is to compute the communication rate as a function of $J$ for some examples, using a numerical optimisation algorithm for evaluating the right-hand side of Eq.~\eqref{qrate}. In particular,  in all these examples the rate function turns out to be convex in $J$. Consider the following normalized version of the rate 
\begin{equation}\label{Mn}
\hat R(J) := \min_{\mathcal{N}_{X \to \tilde X}: \frac{I(\tilde X;Y)_{\tilde\rho}}{I(X;Y)_{\rho}}  \geq J} \frac{I(X';\tilde X)_{\tilde\tau}}{I(X';X)_{\tau}},
\end{equation}
where now $0< J < 1$. In the following examples the systems $X$ and $Y$ are described by two-dimensional Hilbert spaces spanned by the basis $\lvert \uparrow \rangle, \lvert \downarrow \rangle$.

To find the optimum numerically, a simple random search algorithm is used \cite{code}. It initially chooses a number of channels at random (in terms of their Kraus operators) and computes the related mutual information, then randomly varies those channels with the lowest $I(X',\tilde X)$ further until a stable optimum is reached. 

This algorithm is applied to three classes of input states $\rho_{XY}$: The first example is a ``classical'' state, i.e., a state without entanglement between the systems $X$ and $Y$, given by the density matrix 
\begin{equation*}
\rho_{XY}^{(1)} := p_1 \lvert \uparrow \uparrow \rangle \langle \uparrow \uparrow \rvert + p_2 \lvert \downarrow \uparrow \rangle \langle \downarrow \uparrow \rvert
+ p_3 \lvert \uparrow \downarrow \rangle \langle \uparrow \downarrow \rvert + p_4 \lvert \downarrow \downarrow \rangle \langle \downarrow \downarrow \rvert~,
\end{equation*}
where $p_1, p_2, p_3, p_4$ are nonnegative numbers with $p_1 + p_2 + p_3 + p_4 = 1$. 
The second example is a state with entanglement between $X$ and $Y$, namely, 
\begin{equation*}
\rho_{XY}^{(2)} = \frac{1}{2} \lvert \uparrow \uparrow \rangle \langle \uparrow \uparrow \rvert +  \frac{1}{4} \lvert \uparrow \uparrow \rangle \langle \downarrow \downarrow \rvert +
 \frac{1}{4} \lvert \downarrow \downarrow \rangle \langle \uparrow \uparrow \rvert +  \frac{1}{2} \lvert \downarrow \downarrow \rangle \langle \downarrow \downarrow \rvert.
\end{equation*}
Finally, the third example is again a state with entanglement defined as
\begin{equation*}
\rho_{XY}^{(3)} = p_1 \lvert v \rangle \langle v \rvert + p_2 \lvert w \rangle \langle w \rvert
\end{equation*}
with the normalized vectors
\begin{equation*}
v = \frac{1}{\sqrt{2}}\big( \lvert\uparrow \uparrow \rangle + \lvert \downarrow \downarrow \rangle \big), \quad w = \lvert \downarrow \downarrow \rangle
\end{equation*}
and nonnegative numbers $p_1, p_2$ with $p_1 + p_2 =1$. 
The  plots presented in Figs.~\ref{state1}--\ref{state3bis} show the rate as a function of distortion. The blue lines correspond to the curves $\hat R(J) = J$ and $\hat R(J) = \frac{1}{2}J$ and are introduced for comparison with the actual result (red line).

\begin{figure}[t!]
  \centering
      \includegraphics[width=0.45\textwidth]{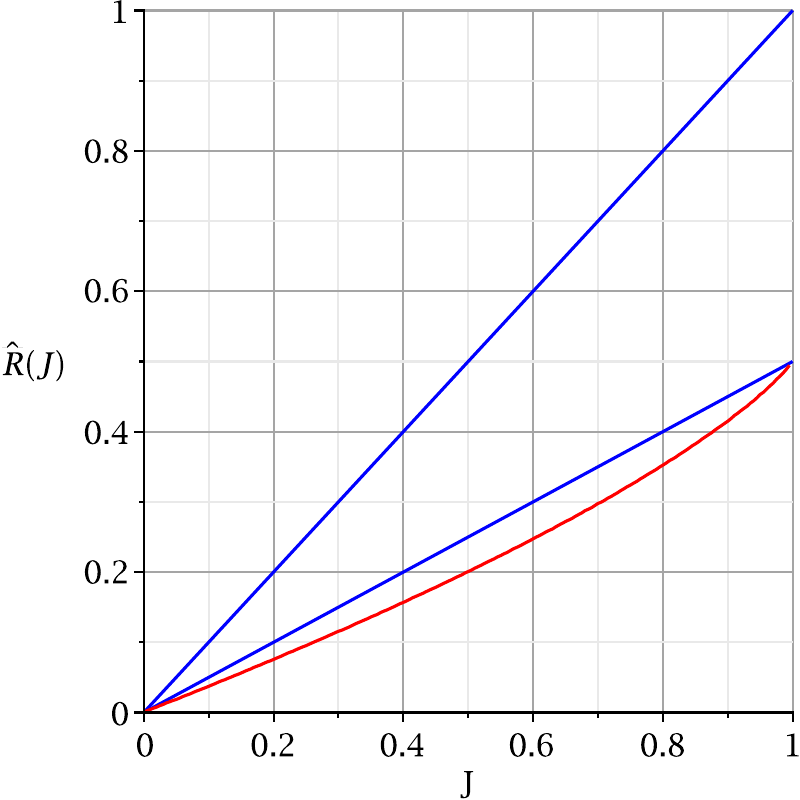}
  \caption{\small The function $\hat R(J)$ (in red) for the initial state $\rho_{XY}^{(1)}$ with $(p_1, p_2, p_3, p_4) = (0.1,0.2,0.3,0.4)$.}
  \label{state1}
\end{figure}

\begin{figure}[t!]
  \centering
      \includegraphics[width=0.45\textwidth]{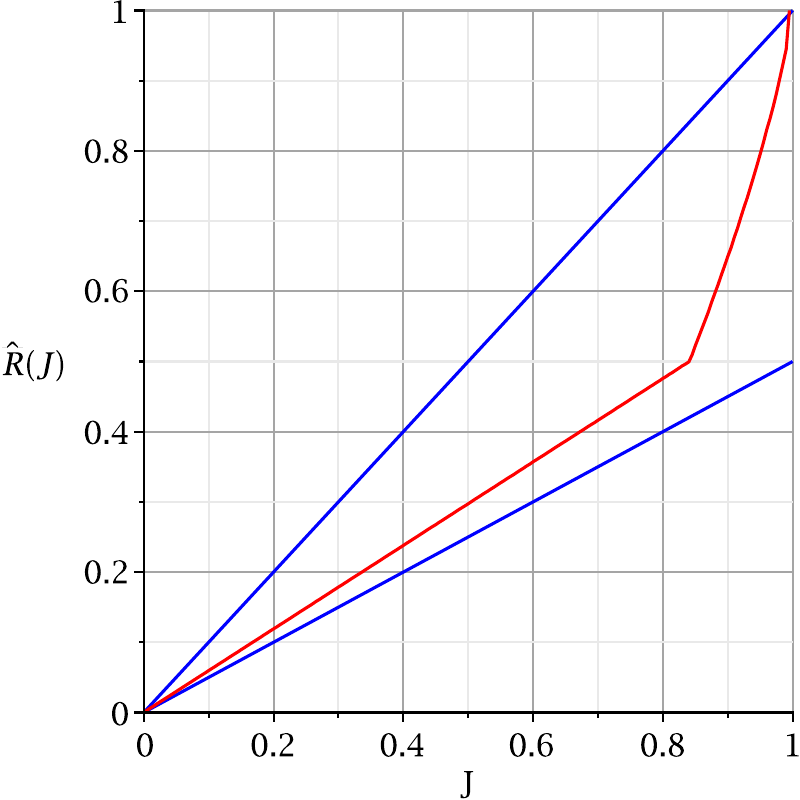}
  \caption{\small The function $\hat R(J)$ (in red) for the initial state $\rho_{XY}^{(2)}$.}
  \label{state2}
\end{figure}

\begin{figure}[t!]
  \centering
      \includegraphics[width=0.45\textwidth]{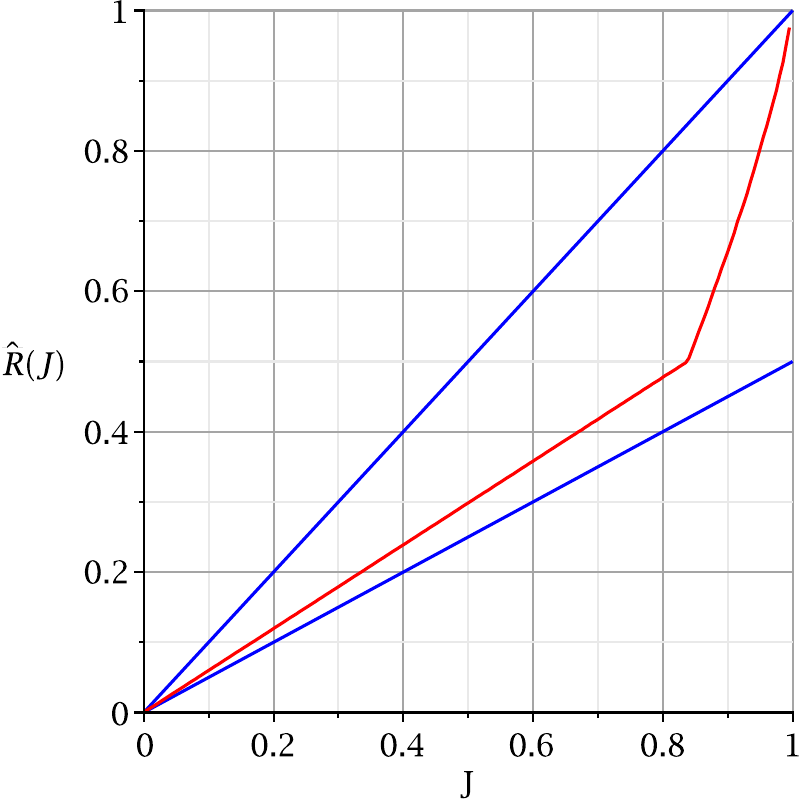}
  \caption{\small The function $\hat R(J)$ (in red) for the initial state $\rho_{XY}^{(3)}$ with $(p_1,p_2)= (0.4,0.6)$.}
  \label{state3}
\end{figure}

\begin{figure}[t!]
  \centering
      \includegraphics[width=0.45\textwidth]{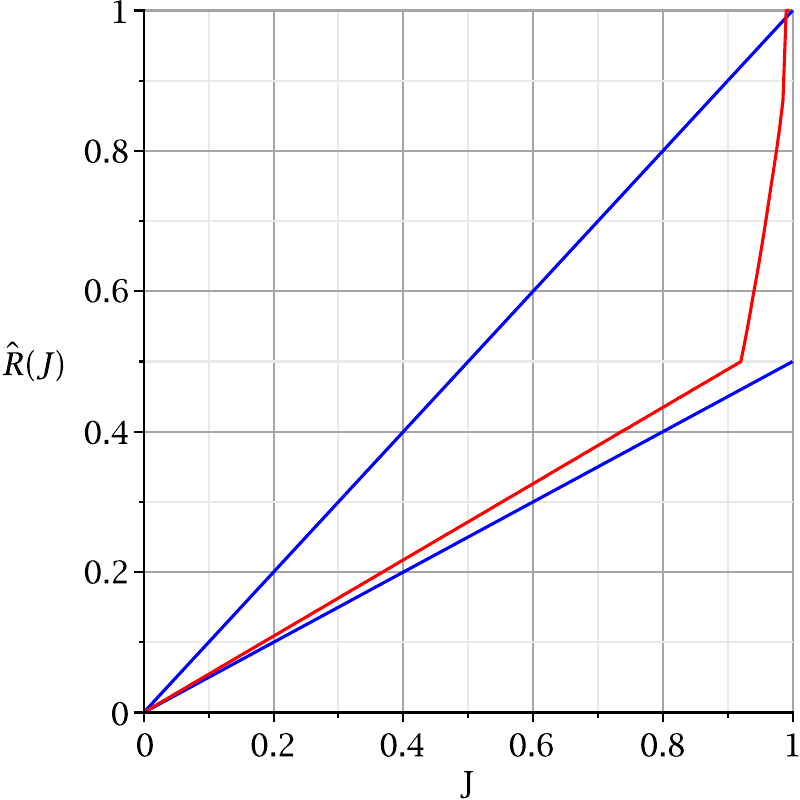}
  \caption{\small The function $\hat R(J)$ (in red) for the initial state $\rho_{XY}^{(3)}$ with $(p_1,p_2)= (0.2,0.8)$.}
  \label{state3bis}
\end{figure}

The four plots show that the function $\hat R(J)$, given by \eqref{Mn}, is indeed a convex function in $J$ for the specific choices of initial state $\rho_{XY}$  and within the limits of the numerical approximation. Note that in Figs.~\ref{state2}--\ref{state3bis},  the graph does not appear to be differentiable at the point $\hat R=\frac{1}{2}$. This seems to be a common feature of the examples $\rho_{XY}^{(2)}, \rho_{XY}^{(3)}$, but we do not currently have an analytic explanation for this behaviour. Note that in Fig.~\ref{state1}, one has $M(1) = \frac{1}{2}$, while in Figs.~\ref{state2}--\ref{state3bis}, one has $M(1) = 1$. In other words, in the case of a ``classical'' (non entangled) state $\rho_{XY}$, there is a channel such that $I(\tilde X; Y) = I(X;Y)$ and $I(\tilde X; X') = \frac{1}{2} I(X;X')$. To obtain an analytic expression of this channel we proceed as follows.

The initial state is $\rho_{XY}^{(1)}$.
Then, $\rho_X^{(1)} = \operatorname{Tr}_{Y}\rho_{XY}^{(1)} = (p_1+p_3)\lvert \uparrow \rangle \langle \uparrow \rvert + (p_2+p_4)\lvert \downarrow \rangle \langle  \downarrow \rvert$ and a purification is given by  $\tau_{X'X}^{(1)} = \lvert w \rangle \langle w \rvert$ with $\lvert w \rangle = \sqrt{p_1+p_3}\lvert \uparrow \uparrow \rangle + \sqrt{p_2+p_4}\lvert \downarrow \downarrow \rangle$.
Our ansatz for the channel is 
\begin{equation}\label{ch}
\mathcal{N}_{X \rightarrow \tilde X}(\rho) = \frac{1}{2} \sum_{i=1}^2 K_i \rho K_{i}^{\dagger}
\end{equation}
with $K_1 := \lvert \uparrow \rangle \langle \uparrow \rvert - \lvert \downarrow \rangle \langle \downarrow \rvert$ and 
 $K_2 :=\lvert \uparrow \rangle \langle \uparrow \rvert + \lvert \downarrow \rangle \langle \downarrow \rvert$. Applying this channel to $\rho_{XY}^{(1)}$ and $\tau_{X'X}^{(1)}$, we obtain $\tilde\rho_{\tilde X Y}^{(1)} = \rho_{XY}^{(1)}$ and
\begin{equation*}
\tilde\tau_{X'\tilde X}^{(1)} =  (p_1 + p_3)\lvert \uparrow \uparrow \rangle \langle \uparrow \uparrow \rvert_{X' \tilde X }
+  (p_2+p_4) \lvert \downarrow \downarrow \rangle \langle \downarrow \downarrow \rvert_{X'\tilde X }.
\end{equation*}
Note that here $\tilde\tau_{\tilde X}^{(1)} = \tau_{X'}^{(1)} = \tau_X^{(1)} = \rho_X^{(1)}$ are all matrices, but they act on different Hilbert spaces. 
With this choice of channel and initial state it is clear  that $I(\tilde X ; Y)_{\tilde\rho^{(1)}} = I(X;Y)_{\rho^{(1)}}$. In order to show that $I(\tilde X; X')_{\tilde\tau^{(1)}} = \frac{1}{2}I(X;X')_{\tau^{(1)}}$ we compute the von Neumann entropies in $I(\tilde X;X') = S(\tilde X) + S(X') - S(\tilde X X')$ and in $I(X;X')= S(X) + S(X') - S(XX')$ using the fact that $S(XX')=0$, since $\tau_{X'X}$ is pure, and that $\rho_X^{(1)}, \tilde\tau_{X'\tilde X}^{(1)}$ are diagonal. After evaluating the matrix functions, we obtain
\EQ{
 I(\tilde X; X') =&-(p_1 + p_3) \log(p_1 + p_3)\\
 &- (p_2 + p_4)\log(p_2 + p_4)\\
=& \frac{1}{2} I(X;X')
}
as desired. Therefore, $\hat R(1) \leq 1/2$ for this class of states, consistent with the graph. 

This somewhat unexpected feature may be understood as follows: In order to transmit the $X$ part of the non-entangled state $\rho_{XY}^{(1)}$ perfectly, a \emph{classical} channel of 1 bit capacity is sufficient. By the usual quantum teleportation result, this  corresponds to a quantum channel of only $\frac{1}{2}$ qbit capacity, if shared entanglement is available in abundance.

{While the simplistic random search algorithm that we used in this appendix is sufficient for demonstrating the essential quantum features of the system, it would be too inefficient to compute the rate function $R(J)$ for systems with larger Hilbert space dimension. More sophisticated numerical schemes would be required to that end, e.g., taking the gradient of the objective function into account. Appendix~\ref{BA} describes a Lagrangian method that may be used to this end.}

\section{The Optimal Map \label{BA}}

In this appendix we describe a method for finding the optimum channel in Eq.~\eqref{qrate}. We first remark that the optimum is actually attained at the boundary of the constraint region, $I(\tilde X;Y) = J$. This is heuristically clear, as for a channel in the interior ($I(\tilde X;Y) > J$) one could always reduce the channel's communication rate by a small amount at the expense of a slightly decreased quality measure $I(\tilde X;Y)$, staying inside the region of the constraint. We give a more formal argument to this end.
 
{\begin{lemma}\label{lemma:boundarY}
The minimum in Eq.~\eqref{qrate} is attained on the hypersurface $I(\tilde X;Y) = J$.
\end{lemma}
\begin{proof}
Suppose that the minimum was attained at a channel $\mathcal{N}_1$ such that $I(\tilde X;Y)_{\mathcal{N}_1} > J$ (here the subscript denotes that the state $\tilde\rho_{\tilde X Y}$ is computed with respect to the channel $\mathcal{N}_1$).  
Let $\mathcal{N}_0$ be the channel $\mathcal{N}_0 (\sigma) := (\operatorname{dim} \mathcal{H}_{\tilde X})^{-1} \operatorname{Tr} ( \sigma ) \cdot \pmb{1}$; we note that $I(\tilde X; Y)_{\mathcal{N}_0} =0=I(\tilde X; X')_{\mathcal{N}_0}$. For $0 < \lambda < 1$, let us consider a new channel $\mathcal{N}_\lambda:= \lambda \mathcal{N}_1 + (1 - \lambda)\mathcal{N}_0$. From convexity of the mutual information in the channel, it follows that
\begin{equation}
\begin{aligned}
I(\tilde X; X')_{\mathcal{N}_\lambda} &\leq \lambda I(\tilde X;X')_{\mathcal{N}_1} + (1-\lambda) I(\tilde X; X')_{\mathcal{N}_0} 
\\ &<  I(\tilde X;X')_{\mathcal{N}_1}.
\end{aligned}
\end{equation}
On the other hand, from continuity considerations we know that  $I(\tilde X;Y)_{\mathcal{N}_\lambda} > J$ for $\lambda$ close enough to 1, so that also $\mathcal{N_\lambda}$ fulfills the constraint in this region.  Hence $M(J) <  I(\tilde X;X')_{\mathcal{N}_1}$, and $\mathcal{N}_1$ is not the position of the minimum.
\end{proof}
}

{Therefore, we deal with an optimisation problem of a function over a constraint hypersurface, which makes it useful to look for local extrema with the method of Lagrange multipliers.
We use the Lagrangian in Eq.~(\ref{lag}),
\EQ{
\nonumber \mathcal{L}:=& I(X';\tilde X)_{\tilde\tau_{X'\tilde X}} \\
&-\beta I(\tilde X; Y)_{\tilde\rho_{\tilde X Y}}-\operatorname{Tr}_{X \tilde X}(\Psi_{X \tilde X}^{t_{X}}(\Lambda_{X} \otimes I_{\tilde X} )).
}
and look for zeros of its derivatives.}
We use the Choi-Jamio\l{}kowski representation
\EQ{
\Psi_{X' \tilde X}:=\big( \mathcal{I}_{X'} \otimes \mathcal{N}_{X \to {\tilde X} }\big) (\Phi_{X'X})
}
of the channel in order to compute the derivative of the Lagrangian, where $\Phi_{X'X}:=\sum_{i,j=0}^{d-1}\ketbra{i}{j}_{X'} \otimes \ketbra{i}{j}_{X}$ is the Choi-Jamio\l{}kowski matrix corresponding to the identity channel from the Hilbert space $\mathcal{H}_{X'}$ to $\mathcal{H}_X$,
and $\mathcal{N}_{X \to {\tilde X}}$ is the channel that simulates the compression-decompression process. $\Lambda_{X}$, an operator on the Hilbert space $\mathcal{H}_{X}$,
 is the Lagrange multiplier introduced for the normalisation of the channel $\Psi_{X \tilde X}^{t_{X}}$. Considering the definition of the mutual information, to 
compute the derivative $\frac{\delta \mathcal{L}}{\delta \Psi_{X \tilde X}^{t_{X}}}$ of the Lagrangian, we need to compute the following derivatives,
\EQ{
\frac{\delta S(X')_{\tilde\tau}}{\delta \Psi_{X \tilde X}^{t_{X}}}, \quad  \frac{\delta S(Y)_{\tilde\rho}}{\delta \Psi_{X \tilde X}^{t_{X}}}, \label{zeros}\\
 \frac{\delta S({\tilde X})_{\tilde\tau}}{\delta \Psi_{X \tilde X}^{t_{X}}}, \quad  \frac{\delta S({\tilde X})_{\tilde\rho}}{\delta \Psi_{X \tilde X}^{t_{X}}}, \quad \frac{\delta S(X'{\tilde X})_{\tilde\tau}}{\delta \Psi_{X \tilde X}^{t_{X}}}, \\
\frac{\delta S({\tilde X}Y)_{\tilde\rho}}{\delta \Psi_{X \tilde X}^{t_{X}}}, \quad  \frac{\delta \operatorname{Tr}_{X \tilde X}(\Psi_{X \tilde X}(\Lambda_{X} \otimes I_{\tilde X} ))}{\delta \Psi_{X \tilde X}^{t_{X}}}.
}

Notice that the functions in the numerator of the expressions in the first two equations in (\ref{zeros}) are independent of the channel and, hence, the derivatives are zero. For the five remaining ones we note that for a Hermitian operator, $A$, and a function, $f$, which is analytic on the spectrum of $A$, the
directional derivative of $\operatorname{Tr}[f(A)]$ is given by 
\EQ{
\frac{\delta \operatorname{Tr}[f(A)]}{\delta A}[B]=\operatorname{Tr}[f'(A)B], \label{variation}
}
with the direction given by the operator $B$ and $f'$ being the first derivative of the function $f$. (This follows from analytic functional calculus, expanding $\operatorname{Tr} f(A+\epsilon B) $ in a Taylor series around $\epsilon=0$.) Specifically, let us define $f(z):=z\log (z)$. Since the derivative of our function $f$ is given by $f'(\cdot)=(1+\log)(\cdot)$, using (\ref{variation}) and (\ref{03choijoint}), we have
\begin{equation}
\dfrac{\delta S(\tilde X)_{\tilde\tau}}{\delta \Psi_{X \tilde X}^{t_{X}}} [B_{X \tilde X}] = -\operatorname{Tr}_{X' \tilde X} \big\{ \big[ (I_{\tilde X} + \log \tilde\tau_{\tilde X}) \otimes I_{X'} \big] \mathbb{E} \big\},
\end{equation}
where
\begin{equation}
\mathbb{E} := \operatorname{Tr}_X \big(  B_{X \tilde X} \tau_{X' X} \big).
\end{equation}
Likewise, we can compute
\begin{equation}
\dfrac{\delta S(X'\tilde X)_{\tilde\tau}}{\delta \Psi_{X \tilde X}^{t_{X}}} [B_{X \tilde X}]=-\operatorname{Tr}_{X'\tilde X  } \Big\{  \big[(I_{X' \tilde X} + \log \tilde\tau_{X'\tilde X }) \big]\mathbb{E}\Big\}.
\end{equation}
In order to compute the derivative for  $S(\tilde X)_{\tilde\rho}$ and $S(\tilde X Y)_{\tilde\rho}$ we use Eqs.~(\ref{rhotildeXY}) and (\ref{variation}), and we find
\begin{equation}
\dfrac{\delta S({\tilde X})_{\tilde\rho}}{\delta \Psi_{x\tilde X}^{t_{X}}} [B_{X \tilde X}]=-\operatorname{Tr}_{\tilde X Y} \Big\{  \big[ \big( I_{\tilde X } + \log \tilde\rho_{{\tilde X}} \big) \otimes I_{Y} \big] \mathbb{G}  \Big\}
\end{equation}
and
\begin{equation}
\dfrac{\delta S({\tilde X}Y)_{\tilde\rho}}{\delta \Psi_{X \tilde X}^{t_{X}}} [B_{X \tilde X}]=-\operatorname{Tr}_{\tilde X Y} \Big\{  \big( I_{\tilde X Y} + \log \tilde\rho_{{\tilde X}Y} \big) \mathbb{G}  \Big\},
\end{equation}
where 
\EQ{
\mathbb{G}&:=&\operatorname{Tr}_{X}\big( B_{X \tilde X} \rho_{XY}  \big).
}
For the last derivative we have 
\EQ{
&\nonumber \dfrac{\delta \operatorname{Tr}_{X\tilde X}\big\{  \Psi_{X \tilde X}^{t_{X}} (\Lambda_ {X} \otimes I_{\tilde X})\big\}}{\delta \Psi_{X \tilde X}^{t_{X}}}[B_{X \tilde X}] &=\\
&\operatorname{Tr}_{X \tilde X}\big\{  (\Lambda_ {X} \otimes I_{\tilde X})B_{X \tilde X}\big\}. \label{normal}
}
Putting all the terms together we have
\EQ{
 \frac{\delta \mathcal{L}}{\delta \Psi_{X {\tilde X}}^{t_{X}}}[B_{X \tilde X}]=& \frac{\delta S({\tilde X})_{\tilde\tau}}{\delta \Psi_{X \tilde X}^{t_{X}}}[B_{X \tilde X}]-\frac{\delta S(X'{\tilde X})_{\tilde\tau}}{\delta \Psi_{X \tilde X}^{t_{X}}}[B_{X \tilde X}]\\&-\beta\frac{\delta S({\tilde X})_{\tilde\rho}}{\delta \Psi_{X \tilde X}^{t_{X}}}[B_{X \tilde X}] +\beta\frac{\delta S({\tilde X}{\tilde Y})_{\tilde\rho}}{\delta \Psi_{X \tilde X}^{t_{X}}}[B_{X \tilde X}]\\
 &- \frac{\delta \operatorname{Tr}_{X \tilde X}(\Psi_{X \tilde X}^{t_{X}}(\Lambda_ {X} \otimes I_{\tilde X}))}{\delta \Psi_{X \tilde X}^{t_{X}}}[B_{X \tilde X}]\\
 =& \operatorname{Tr}_{X'\tilde X } \Big\{  \big[ - I_{X'}\otimes (I_{\tilde X} + \log \tilde\tau_{\tilde X}) \\
 &+ (I_{X' \tilde X} + \log \tilde\tau )+\beta I_{X'} \\
 &\otimes (I_{\tilde X} + \log \tilde\tau_{\tilde X})\big] \mathbb{E}\Big\} \\
 &-\beta\operatorname{Tr}_{\tilde X Y} \Big\{ (I_{\tilde X Y} + \log \tilde\rho_{{\tilde X}Y})\mathbb{G}  \Big\}\\
 &-\operatorname{Tr}_{X \tilde X}\big\{ (\Lambda_ {X} \otimes I_{\tilde X})B_{X \tilde X}\big\}.
}
Setting this expression to zero ($\frac{\delta \mathcal{L}}{\delta \Psi_{x{\tilde X}}^{t_X}}[B_{X \tilde X}]=0$), 
we find
\EQ{
 &\operatorname{Tr}_{X'\tilde X }\big\{ \log \tilde\tau_{X' \tilde X}  \mathbb{E}\big\} =\\
&\operatorname{Tr}_{X'\tilde X } \Big\{  \big[ I_{X'}\otimes  \log \tilde\tau_{\tilde X}- \beta I_{X'}\otimes \log \tilde\tau_{\tilde X} \big] \mathbb{E}\Big\} \\
&+\beta\operatorname{Tr}_{\tilde X Y} \Big\{  \log \tilde\rho_{{\tilde X}Y} \mathbb{G}  \Big\}+\operatorname{Tr}_{X \tilde X}\big\{  (\Lambda_ {X} \otimes I_{\tilde X})B_{X \tilde X}\big\}.
}
Rearranging left and right hand sides of this equation, we find
\EQ{
&\operatorname{Tr}_{X \tilde X}\Big\{  B_{X \tilde X} \operatorname{Tr}_{X'} \Big\{ \tau_{X' X} ( \log \tilde\tau_{X' \tilde X} \otimes I_{X}) \Big\} \Big\}=\\
&\operatorname{Tr}_{X \tilde X}\Big\{  B_{X \tilde X} \operatorname{Tr}_{X'} \Big\{ \tau_{X' X} \big[ \big( I_{X'} \otimes \log \tilde\tau_{\tilde X}  - \beta I_{X'} \otimes \log \tilde\tau_{\tilde X} \big)\\
&\otimes I_X \big] \Big\}\Big\} \nonumber+\beta \operatorname{Tr}_{X \tilde X} \Big\{ B_{X \tilde X} \operatorname{Tr}_{Y} \Big\{   \rho_{XY} \big( \log \tilde\rho_{\tilde X Y} \otimes I_X \big) \Big\} \Big\}\\
&+\operatorname{Tr}_{X \tilde X} \Big\{ B_{X \tilde X}(\Lambda_ {X} \otimes I_{\tilde X})\Big\}.
}
This holds for all directions $B_{X \tilde X}$, which implies 
\EQ{
&\operatorname{Tr}_{X'} \Big\{ \tau_{X' X} ( \log \tilde\tau_{X' \tilde X} \otimes I_{X}) \Big\} =\\
&\operatorname{Tr}_{X'} \Big\{\tau_{X' X} \big[ \big( I_{X'} \otimes \log \tilde\tau_{\tilde X}\\
&- \beta I_{X'} \otimes \log \tilde\tau_{\tilde X} \big) \otimes I_X \big] \Big\}\\
&+\beta \operatorname{Tr}_{Y} \Big\{   \rho_{XY} \big( \log \tilde\rho_{\tilde X Y} \otimes I_X \big) \Big\} \\
&+\Lambda_ {X} \otimes I_{\tilde X}.
}
By performing the partial trace on the left hand side of this expression, we obtain
\EQ{
\rho_X^{1/2}( \log \tilde\tau_{X \tilde X})^{t_X}\rho_X^{1/2}=&\tau_{X}  \big( \log \tilde\tau_{\tilde X}  - \beta \log \tilde\tau_{\tilde X} \big)\\
&+\beta \operatorname{Tr}_{Y} \Big\{   \rho_{XY} \big( \log \tilde\rho_{\tilde X Y} \otimes I_X \big) \Big\}\\
&+\Lambda_ {X} \otimes I_{\tilde X}.\label{findlambda}
}
Simplifying this expression further, we find
\EQ{
(\log \tilde\tau_{X \tilde X})^{t_X} =& I_X \otimes \big( \log \tilde\tau_{\tilde X}  - \beta \log \tilde\tau_{\tilde X} \big)\\
&+\beta \operatorname{Tr}_{Y} \Big\{   \rho_X^{-1/2}\rho_{XY} \rho_X^{-1/2} \big( \log \tilde\rho_{\tilde X Y} \otimes I_X \big) \Big\}\\
&+\rho_X^{-1/2}\Lambda_ X \rho_X^{-1/2} \otimes I_{\tilde X}.\label{findtau}
}
Let us denote 
\EQ{
D_{X \tilde X}^{\beta Y} := &\beta I_X \otimes \log \tilde\tau_{\tilde X} \\
&- \beta\operatorname{Tr}_{Y} \Big\{   \rho_X^{-1/2}\rho_{XY} \rho_X^{-1/2} \big( \log \tilde\rho_{\tilde X Y} \otimes I_X \big) \Big\}, 
}
and the normalisation term $\tilde \Lambda_X := \rho_X^{-1/2}\Lambda_ X \rho_X^{-1/2}$. Exponentiating both sides of Eq.~\eqref{findtau}, we obtain
\EQ{
\tilde\tau_{X\tilde X }^{t_X}= e^{\log \tilde\tau_{\tilde X} \otimes I_X - D_{X \tilde X}^{\beta Y}+\tilde \Lambda_X \otimes I_{\tilde X} }.
}
Using Eq.~(\ref{03choijoint}), we arrive at the expression for the Choi-Jamio\l{}lkowski matrix corresponding to the channel, 
\EQ{\label{sol2}
\Psi_{X \tilde X}^{t_X} =&( \rho_X \otimes I_{\tilde X})^{-1/2} e^{\log \tilde\tau_{\tilde X} \otimes I_X - D_{X \tilde X}^{\beta Y}+\tilde \Lambda_X \otimes I_{\tilde X} }\\
&( \rho_X \otimes I_{\tilde X})^{-1/2}.
}
{Note that this is an implicit equation in $\Psi_{X \tilde X}$ since it also appears on the right hand side of this expression. To find the optimum channel,  Eq.~\eqref{sol2} needs to be solved  iteratively for $\Psi_{X\tilde X}$. 
Note that the unknown Lagrange multiplier $\Lambda_{X}$, which is associated with the normalisation constraint, is still contained in this equation. An algorithm that recursively computes the channel might work as follows. Starting with a guess for the channel $\Psi_{X \tilde X}$ and normalising it, this guess is inserted into Eq.~\eqref{findlambda} to compute a self-consistent value for $\Lambda_{X}$. This would allow to compute $\tilde\rho_{\tilde X}$ and $\tilde\rho_{\tilde X Y}$ from the channel, and hence give an approximation for all quantities that enter the right hand side of Eq.~\eqref{sol}. Thus,  a new approximation for the left-hand side is obtained, i.e., the channel $\Psi_{X \tilde X}$. When this procedure is  repeated iteratively,  an optimal channel $\Psi_{X \tilde X}$ is obtained at a given value for $\beta$.  Repeating this procedure for different values of $\beta$ and optimising under the constraint $I(\tilde X;Y)_{\tilde\rho}\geq J$ yields the minimum in \eqref{qrate}.}

\begin{IEEEbiographynophoto}{Sina Salek}
After completing his PhD at the University of Bristol, Sina Salek became a postdoctoral research fellow at the University of Hong Kong, before joining the Department of Computer Science, the University of Oxford as a postdoctoral researcher. His research interest is in quantum information theory and quantum foundations.
\end{IEEEbiographynophoto}

\begin{IEEEbiographynophoto}{Daniela Cadamuro}
Daniela Cadamuro leads an Emmy Noether junior research group at the Institute for Theoretical Physics at the University of Leipzig. She received a PhD in theoretical physics from the Georg-August-Universität Göttingen in Germany in 2012. She held postdoctoral positions at the University of Bristol and at the Technische Universität München (TUM). Her research interests lie in mathematical aspects of quantum theory, in particular quantum field theory.
\end{IEEEbiographynophoto}

\begin{IEEEbiographynophoto}{Philipp Kammerlander}
Philipp Kammerlander received the B.Sc. and M.Sc. degrees in Physics from ETH Zurich in 2012 and 2013. He is currently pursuing a Ph.D. degree at the Institute for Theoretical Physics at ETH Zurich. His research interests include classical and quantum thermodynamics and quantum information theory.
\end{IEEEbiographynophoto}

\begin{IEEEbiographynophoto}{Karoline Wiesner}
Karoline Wiesner is Associate Professor in the School of Mathematics at the University of Bristol. She received a PhD in physics from Uppsala University in Sweden in 2004. She was a postdoctoral fellow at the Santa Fe Institute and the University of California, Davis. In 2007 she joined the faculty of the School of Mathematics at the University of Bristol. Her research interests lie in quantum and classical information theory and complex systems, ranging from physics to the social sciences.
\end{IEEEbiographynophoto}

\end{document}